\newtheorem{theorem}{Theorem}
\newtheorem{corollary}{Corollary}
\newtheorem{lemma}{Lemma}
\newtheorem{proposition}{Proposition}
\newtheorem{problem}{Problem}
\newtheorem{definition}{Definition}
\newtheorem{assumption}{Assumption}
\newcommand{\define}[1]{\textit{#1}}
\newcommand{\join}{\vee}
\newcommand{\meet}{\wedge}
\newcommand{\bigjoin}{\bigvee}
\newcommand{\bigmeet}{\bigwedge}
\newcommand{\jointimes}{\boxplus}
\newcommand{\meettimes}{\boxplus'}
\newcommand{\bigjoinplus}{\bigjoin}
\newcommand{\bigmeetplus}{\bigmeet}
\newcommand{\joinplus}{\join}
\newcommand{\meetplus}{\meet}
\newcommand{\semimod}{\mathcal{S}}
\newcommand{\graph}{\mathcal{G}}
\newcommand{\nodes}{\mathcal{V}}
\newcommand{\edges}{\mathcal{E}}
\newcommand{\neighbors}{\mathcal{N}}
\newcommand{\Weights}{\mathcal{A}}
\renewcommand{\leq}{\leqslant}
\renewcommand{\geq}{\geqslant}
\renewcommand{\preceq}{\preccurlyeq}
\renewcommand{\succeq}{\succcurlyeq}
\newcommand{\Rmax}{\mathbb{R}_{\mathrm{max}}}
\newcommand{\Rmin}{\mathbb{R}_{\mathrm{min}}}
\newcommand{\R}{\mathbb{R}}
\newcommand{\A}{\mathbf{A}}
\newcommand{\B}{\mathbf{B}}
\newcommand{\x}{\mathbf{x}}
\newcommand{\X}{\mathbf{X}}
\newcommand{\W}{\mathbf{W}}
\newcommand{\weights}{\mathcal{W}}
\newcommand{\alternatives}{\mathcal{X}}
\newcommand{\y}{\mathbf{y}}
\newcommand{\Y}{\mathbf{Y}}
\newcommand{\z}{\mathbf{z}}
\newcommand{\Z}{\mathbf{Z}}
\renewcommand{\b}{\mathbf{b}}
\newcommand{\I}{\mathbf{I}}
\newcommand{\Laplacian}{\mathcal{L}}
\newcommand{\F}{\mathcal{F}}
\newcommand{\inv}[1]{{#1}^{\sharp}}
\begin{document}

\title{\bf Max-Plus Synchronization in Decentralized Trading Systems}

\author{Hans Riess$^{1\ast}$, Michael Munger$^{2,3}$, and Michael M. Zavlanos$^{1,4}$
\thanks{$^{\ast}$Corresponding author's email: \textbf{}\texttt{hans.riess@duke.edu}.}%
\thanks{$^{1}$Department of Electrical and Computer Engineering, Duke University.}%
\thanks{$^{2}$Department of Political Science, Duke University.}%
\thanks{$^{3}$Department of Economics, Duke University.}%
\thanks{$^{4}$Dept.~of Mechanical Engineering \& Materials Science, Duke University.}%
\thanks{This work is supported in part by ONR under agreement \#N00014-18-1-2374 and by AFOSR under the award \#FA9550-19-1-0169.}%
}

\maketitle

\begin{abstract}
We introduce a decentralized mechanism for pricing and exchanging alternatives constrained by transaction costs. We characterize the time-invariant solutions of a heat equation involving a (weighted) Tarski Laplacian operator, defined for max-plus matrix-weighted graphs, as approximate equilibria of the trading system. We study algebraic properties of the solution sets as well as convergence behavior of the dynamical system. We apply these tools to the ``economic problem'' of allocating scarce resources among competing uses. Our theory suggests differences in competitive equilibrium, bargaining, or cost-benefit analysis, depending on the context, are largely due to differences in the way that transaction costs are incorporated into the decision-making process. We present numerical simulations of the synchronization algorithm (RRAggU), demonstrating our theoretical findings.

\end{abstract}

\section{Introduction}
\label{sec:intro}
Max-plus algebra, originally ``minimax algebra'' \cite{cunninghamegreen1979}, is an algebraic theory derived from ordinary linear algebra by substituting the addition operation with maximum and the multiplication operation with ordinary addition. Max-plus algebra is a special case of tropical geometry \cite{maclagan2021}, a relatively new mathematical field concerned with simplifying difficult problems through \emph{tropicalization}---converting everything to max-plus or min-plus arithmetic. The historical development of max-plus algebra has been largely motivated by applications in discrete event systems (DESs), a class of dynamical systems characterized by a discrete state space and event-driven state transactions \cite{cassandras2008}.
The application we propose here, multi-agent economic exchange, lies outside the scope of DESs and is more closely related to efforts to analyze and control markov decision processes (MDPs) with max-plus algebra \cite{berthier2020,chandrashekar2014,goncalves2021,akian2008}: states are value-vectors and state transitions are constrained by transaction costs (e.g.~rewards or penalties to transact).

The “economic problem” is usually characterized as the optimal allocation of scarce resources among potentially competing uses \cite{robbins2007}. Resources are scattered and information about their quality and value is “dispersed” \cite{hayek1945}. Worse, information about value is not just dispersed, but actually unknown, since the “transaction costs” and stated values of bidders and sellers of a potential exchange emerge from the process of negotiation  itself \cite{coase1960}. In economic systems, transaction costs---locating, negotiating terms, packaging, and delivering---play a role analogous to friction in physical systems: energy (value) is lost in the form of heat rather than being available for use.


We suggest that the problem of transaction costs as an impediment to identifying and implementing otherwise valuable exchanges is in fact generic, and suggest that max-plus is a useful means of unifying the underlying structure of the three apparently unrelated, but, in fact, closely connected, aggregation mechanisms in economics: “competitive equilibrium” of market processes \cite{debreu1959,arrow1971}, "bargaining" \cite{coase1960}, and “cost-benefit analysis” \cite{hicks1939,kaldor1939}.

The tropical Tarski Laplacian (Definition \ref{def:tarski}) driving the dynamics of our decentralized trading mechanism closely resembles the alternating method  for solving the two-sided equation $\A \jointimes \x = \B \jointimes \y$ \cite{cunninhamegreene2003}. The motivation for the alternating method, shared in our work, is synchronization of coupled max-plus systems. The key difference for our application is the identification of the evolution of the system toward a convergence, defined as a set of allocations where all resources have “found” their highest valued uses, up to divergences caused by transaction costs of discovering or implementing further exchanges.
Beyond generalizing the alternating method from a single coupling to an arbitrary (undirected) graph of couplings, we note that the updates of the heat equation \eqref{eq:heat} induced by the Tarski Laplacian are synchronous, in contrast to the alternating method, whose updates alternate between two agents.

\subsection{Related Work}
\label{sec:related}

Recently, there has been a resurgence in activity in max-plus algebra focused on applications in control theory \cite{maragos2017}, signal processing \cite{theodosis2018} (including graph signal processing \cite{blusseau2018}), and machine learning \cite{maragos2021} (including deep learning \cite{alfarra2022,zhang2019,tsilivis2022}). Current research in applied tropical geometry extends to statistics \cite{pachter2004} and even optimal transport \cite{lee2022}. A recent focus has been on finding the sparsest possible solutions to one-sided max-plus linear systems \cite{tsiamis2019,tsilivis2022}, leading to natural applications in discrete event systems \cite{tsiamis2019}, optimal control \cite{blusseau2018}, and multivariate convex regression \cite{tsilivis2021}.

The present work also connects to a recent model of multidimensional opinions dynamics \cite{hasnen2021}, more broadly construed as ``social information dynamics'' \cite{riess2022,ghirst2022} drawing on the recently introduced theory of sheaf Laplacians \cite{hansen2019}. Our decentralized method for arriving at (approximate) equilibrium is equivalent to max consensus \cite{nejad2009} in the one-dimensional case ($d =1$). Invoking the insights that sheaf theory \cite{curry2014} and sheaf Laplacians \cite{hansen2019} bring to generalized consensus/coherence problems, the Tarski Laplacian was introduced in its full abstract form \cite{riess2022b} as an operator of assignments of lattice-valued data over an undirected graph tethered together by Galois connections: lattice-valued sheaves \cite{ghrist2022}. Since then, another instance of the Tarski Laplacian was introduced to solve multi-agent knowledge consensus problems \cite{riess2022}. Convergence guarantees in that work \cite{riess2022} rely on the lattices satisfying the descending chain condition \cite{roman2008}. In the present work, we do not have this guarantee, posing a new challenge to convergence analysis.


\subsection{Outline}
\label{sec:outline}

In Section \ref{sec:background} we review background material in max-plus algebra. Then, in Section \ref{sec:problem}, we formulate the synchronization (Problem \ref{prob:exact}) and approximate synchronization problem under consideration (Problem \ref{prob:approximate}). In Section \ref{sec:solution}, we propose an algorithm (Algorithm \ref{alg:sync}) based on the tropical Tarski Laplacian (Definition \ref{def:tarski}) to solve the approximate problem and, in Section \ref{sec:solutions}, we analyze the convergence of the algorithm as well as the properties of the converging solutions. Finally, in Section \ref{sec:experiments}, we present numerical experiments to illustrate the performance and scalability of our proposed algorithm.

\section{Background}
\label{sec:background}

Let $\Rmax$ denote the \define{max-plus} (tropical) semiring $\R \cup \{-\infty\}$ with the operations
\begin{align*}
    \alpha \joinplus \beta &\triangleq& \max\{\alpha, \beta\}, \\
    \alpha + \beta &\triangleq& \alpha + \beta.
\end{align*}
This semiring has (additive) unit $-\infty$ and (multiplicative) unit $0$. Similarly, let $\Rmin$ denote the \define{min-plus} (tropical) semiring $\R \cup \{-\infty\}$ with operations
\begin{align*}
    \alpha \meetplus \beta &\triangleq& \min\{\alpha, \beta\}, \\
    \alpha +' \beta &\triangleq& \alpha + \beta,
\end{align*}
and (additive) unit $\infty$ and (multiplicative) unit $0$.

Both of these semirings define an alternative arithmetic, leading to max-plus and min-plus linear algebra, respectively.
Suppose $\A \in \Rmax^{m \times p}$ and $\B \in \Rmax^{p \times n}$. Then $\A \jointimes \B$ is an $m$-by-$n$ matrix defined as
\begin{align}
    \left[ \A \jointimes \B \right]_{i,j} &=& \bigjoinplus_{k=1}^{p} [\A]_{i,k} + [\B]_{k,j}. \label{eq:matrix-multiply}
\end{align}
The identity matrix $\I \in \Rmax^{n \times n}$ is defined $[\I]_{i,j} = 0$ if $i=j$, $[\I]_{i,j} = -\infty$, otherwise.
Similarly, if $\A \in \Rmin^{m \times p}$ and $\B \in \Rmin^{p \times n}$, then
\begin{align}
    \left[ \A \meettimes \B \right]_{i,j} &=& \bigmeetplus_{k=1}^{p} [\A]_{i,k} +' [\B]_{k,j}. \label{eq:matrix-multiply-2}
\end{align} 
The \define{pseudoinverse} of $\A \in \Rmax^{m \times n}$ is the matrix $\inv{\A} \in \Rmin^{n \times m}$ defined as $[\inv{\A}]_{i,j} = -[\A]_{j,i}$.

The set $\Rmax^n$ with operations analogous to vector addition and scalar multiplication, i.e.,
\begin{align*}
    [\x \join \y]_i &=& \max \{ [\x]_i, [\y]_i \}, \\
    [\alpha + \x]_i &=& \alpha +[\x]_i,
\end{align*}
for $\x, \y \in \Rmax^n,~\alpha \in \Rmax$, is an example of a \define{semimodule}.
Semimodules are the analogues of vector spaces in both the max- (min-) plus setting. Suppose $\{\x_k\} \in \Rmax^n$ and $\{\alpha_k\} \in \Rmax$, then their \define{max-plus linear combination} is the vector
\begin{align*}
    \y = \bigjoin_{k=1}^{K} \left(\alpha_i + \x_{k}\right).
\end{align*}
A subset of $\Rmax^n$ closed under max-plus linear combinations is called a \define{subsemimodule}, and is of special interest. 

If $\A \in \Rmax^{m \times n}$ is a matrix and $\x \in \Rmax^n$ is a column vector, then the multiplication
\begin{align*}
    \left[ \A \jointimes \x \right]_i &=& \bigjoin_{j=1}^{n} [\A]_{i,j} + [\x]_j
\end{align*}
defines a transformation between the semimodules $\Rmax^n$ and $\Rmax^m$. The following assumption is standard.

\begin{definition}[\cite{cunninghamegreen1979}]
    A matrix $\A \in \Rmax^{m \times n}$ is \define{doubly G-astic} if every row and column of $\A$ has at least one entry greater than $-\infty$.
\end{definition}

Max-plus transformations have the following properties (analogous properties hold for min-plus transformations).
\begin{lemma}[{\cite{butkovic2010}}] \label{lem:nonlinear}
Suppose $\A \in \Rmax^{m\times n}$, $\x, \y \in \Rmax^n$, and $\alpha \in \Rmax$. Then,
\leavevmode
\begin{enumerate}
    \item $\A \jointimes \left( \x \join \y \right) = \left(\A \jointimes \x \right) \join \left( \A \jointimes \y\right)$.
    \item $\A \jointimes \left(\alpha + \x \right) = \alpha + \A \jointimes \x$.
    \item If $\x \preceq \y$, then $\A \jointimes \x \preceq \A \jointimes \y$.
\end{enumerate}
\end{lemma}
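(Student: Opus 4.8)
The plan is to prove all three statements by reducing each to a componentwise identity or inequality in the scalar semiring $\Rmax$, where everything follows from two elementary facts: ordinary addition distributes over the maximum, i.e. $a + \max\{b,c\} = \max\{a+b, a+c\}$, and translation by a constant commutes with the maximum, i.e. $\alpha + \max_j t_j = \max_j (\alpha + t_j)$. Both are instances of the semiring distributive law of $\otimes$ over $\oplus$ specialized to $(\R, +, \max)$, and both extend to the adjoined element $-\infty$ under the convention $\alpha + (-\infty) = -\infty$. The matrix $\A$ is treated as a fixed array of scalars throughout, so no structural property of $\A$ is needed beyond the definition of $\jointimes$.

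For part 1, I would fix a row index $i$ and expand the left-hand side using the definition of $\jointimes$ together with the definition of the join on $\Rmax^n$: $[\A \jointimes (\x \join \y)]_i = \bigjoin_j [\A]_{i,j} + \max\{[\x]_j, [\y]_j\}$. Distributing the addition over the inner maximum rewrites each summand as $\max\{[\A]_{i,j}+[\x]_j,\ [\A]_{i,j}+[\y]_j\}$, and then splitting the outer maximum over the doubled index set yields $\big(\bigjoin_j [\A]_{i,j}+[\x]_j\big) \join \big(\bigjoin_j [\A]_{i,j}+[\y]_j\big)$, which is exactly $[(\A \jointimes \x) \join (\A \jointimes \y)]_i$. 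Since $i$ was arbitrary, the two vectors coincide.

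For part 2, the same componentwise expansion gives $[\A \jointimes (\alpha + \x)]_i = \bigjoin_j [\A]_{i,j} + \alpha + [\x]_j$, and pulling the constant $\alpha$ outside the maximum by translation invariance produces $\alpha + \bigjoin_j [\A]_{i,j} + [\x]_j = \alpha + [\A \jointimes \x]_i$. Part 3 then follows most cleanly as a corollary of part 1: the order on $\Rmax^n$ satisfies $\x \preceq \y \iff \x \join \y = \y$, so applying $\A \jointimes (-)$ and invoking part 1 gives $\A \jointimes \y = (\A \jointimes \x) \join (\A \jointimes \y)$, which says precisely $\A \jointimes \x \preceq \A \jointimes \y$. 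Alternatively, one argues directly from the monotonicity of the maximum in each of its arguments.

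I do not expect a genuine obstacle, since the statement is foundational and each identity is a one-line scalar computation. The only point deserving care is the bookkeeping with the additive unit $-\infty$: I must confirm that distributivity and translation invariance remain valid when some entries of $\A$ or $\x$ equal $-\infty$, and in particular that the convention $\alpha + (-\infty) = -\infty$ is applied consistently, including the degenerate case $\alpha = -\infty$ in part 2. Verifying that these boundary cases do not break the two scalar lemmas is essentially the whole of the work.
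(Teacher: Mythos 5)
Your proof is correct: the componentwise reduction to the two scalar facts (distributivity of $+$ over $\max$ and translation invariance of $\max$), with part~3 deduced from part~1 via $\x \preceq \y \iff \x \join \y = \y$, is the standard argument, and your handling of the $-\infty$ boundary cases is the only point of care and is handled properly. Note that the paper itself gives no proof of this lemma---it is quoted from the cited reference---so there is no in-paper argument to diverge from; your write-up matches the textbook treatment.
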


The semimodule $\Rmax^n$ can be viewed as a partially ordered set under the product order: $\x \preceq \y$ if and only if $x_i \leq x_i,~\forall i \in \{1,2,\dots, n\}$
with supremum and infimum operators
\begin{align}
    [\x \join \y]_i &=& \max \{ [\x]_i, [\y]_i \} \label{eq:join} \\
    [\x \meet \y]_i &=& \min \{ [\x]_i, [\y]_i \} \label{eq:meet}
\end{align}
called \define{join} and \define{meet}, respectively, making $\Rmax^n$ a \define{lattice} \cite{roman2008}. Semimodules, such as $\Rmax^n$ or $\Rmin^n$, with the additional structure of a lattice have been called weighted lattices in the literature \cite{maragos2017,tsilivis2022}. The following result follows from \emph{residuation theory} \cite{blyth2014}, closely related to Galois connections \cite[Chapter 7]{davey2002}.
\begin{lemma}[{\cite[Lemma 7.26]{davey2002}}] \label{lem:adjoint}
    Suppose $\A \in \R^{m \times n}$. Then, for all $\x \in \Rmax^n$, $\y \in \Rmin^m$
    \begin{enumerate}
        \item $\A \jointimes \x \preceq \y$ if and only if $\x \preceq \inv{\A} \meettimes \y$
        \item $\inv{\A} \meettimes \left( \A \jointimes \x \right) \succeq \x$ and $\A \jointimes \left( \inv{\A} \meettimes \y \right) \preceq \y$
    \end{enumerate}
\end{lemma}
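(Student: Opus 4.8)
The plan is to recognize this as the statement that the maps $\x \mapsto \A \jointimes \x$ and $\y \mapsto \inv{\A} \meettimes \y$ form a monotone Galois connection between $\Rmax^n$ and $\Rmin^m$, so that both claims reduce to unfolding the two matrix products into scalar inequalities. I would prove part (1) by a direct entrywise computation and then obtain part (2) for free by the standard substitution argument, with no further calculation.

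Concretely, for part (1) I would first expand the left condition. Since $[\A \jointimes \x]_i = \bigjoin_{j} [\A]_{i,j} + [\x]_j$, the inequality $\A \jointimes \x \preceq \y$ says that for every $i$ the maximum over $j$ is at most $[\y]_i$, which is equivalent to the family of scalar inequalities $[\A]_{i,j} + [\x]_j \leq [\y]_i$ for all $i,j$. Because $\A$ has entries in $\R$ (this is exactly where the finiteness hypothesis is used), I may subtract the finite number $[\A]_{i,j}$ and rewrite this as $[\x]_j \leq [\y]_i - [\A]_{i,j}$ for all $i,j$. On the other side, the definition of the pseudoinverse gives $[\inv{\A} \meettimes \y]_j = \bigmeet_{i} ([\inv{\A}]_{j,i} +' [\y]_i) = \bigmeet_{i} ([\y]_i - [\A]_{i,j})$, so $\x \preceq \inv{\A} \meettimes \y$ says $[\x]_j \leq \bigmeet_{i} ([\y]_i - [\A]_{i,j})$ for every $j$, which holds if and only if $[\x]_j \leq [\y]_i - [\A]_{i,j}$ for all $i,j$. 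The two conditions coincide, establishing (1).

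For part (2) I would instantiate (1). Taking $\y := \A \jointimes \x$, the reflexive inequality $\A \jointimes \x \preceq \A \jointimes \x$ together with the forward direction of (1) yields $\x \preceq \inv{\A} \meettimes (\A \jointimes \x)$, i.e. the first inequality. Dually, taking $\x := \inv{\A} \meettimes \y$, the reflexive inequality $\inv{\A} \meettimes \y \preceq \inv{\A} \meettimes \y$ together with the reverse direction of (1) yields $\A \jointimes (\inv{\A} \meettimes \y) \preceq \y$, the second inequality.

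I do not expect a genuine obstacle here; the only point requiring care is the bookkeeping with the extended reals. The hypothesis $\A \in \R^{m\times n}$ guarantees that each $[\A]_{i,j}$ is finite, so the rearrangement $[\A]_{i,j} + [\x]_j \leq [\y]_i \Longleftrightarrow [\x]_j \leq [\y]_i - [\A]_{i,j}$ never produces an indeterminate $-\infty + \infty$, even when $[\x]_j = -\infty$ or $[\y]_i = +\infty$. I would briefly check these boundary cases to confirm the equivalence survives, and note that the monotonicity in Lemma~\ref{lem:nonlinear}(3) makes the two composite maps in (2) the expected closure and interior operators, which is the conceptual content of the residuation result being cited.
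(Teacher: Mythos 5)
Your proof is correct. The paper does not actually prove this lemma---it is cited as an instance of \cite[Lemma 7.26]{davey2002} from residuation theory---so there is no in-paper argument to compare against; your entrywise unfolding of both matrix products into the common family of scalar inequalities $[\x]_j \leq [\y]_i - [\A]_{i,j}$, followed by the standard substitution of $\y := \A \jointimes \x$ and $\x := \inv{\A} \meettimes \y$ to extract part (2) from part (1), is precisely the textbook Galois-connection argument the citation points to, and you correctly locate the role of the hypothesis $\A \in \R^{m \times n}$ in making the subtraction of $[\A]_{i,j}$ legitimate on the extended reals.
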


One consequence of Lemma \ref{lem:adjoint} is that it characterizes (sub)solutions of max-plus matrix equations. Suppose $\A \in \Rmax^{m \times n}$ and $\b \in \Rmax^n$. Then, the residuation $\bar{\x} = \inv{\A} \meettimes \b$, called the \define{principal solution}, is the greatest solution of $\A \jointimes \x = \b$, if a solution exists, otherwise, the greatest subsolution, i.e.~a vector $\x$ such that $\A \jointimes \x \preceq \b$ \cite[Proposition 1.1]{cunninghamegreen1979}. The vector $\inv{\A} \meettimes \b$ is \define{finite}, i.e.~having no entries equal to $-\infty$, if $\b$ is finite and $\A$ is doubly G-astic. However, non-finite subsolutions (less than $\bar{\x}$ in the product order) are shown to be found by selecting entries of $\bar{\x}$ to be $-\infty$ in a greedy algorithm \cite{tsiamis2019}.


\section{Problem Definition}
\label{sec:problem}

Consider a multi-agent trading system, where pairs of \define{agents} compare the value of \define{alternatives} often leading to an exchange.
We assume agents are collected in a finite set $\{1,2,\dots, N\}$ and alternatives are collected in a finite set $\alternatives = \{1,2,\dots,d\}$. We use letters $u,v,w$ to denote agents and letters $i,j,k$ to denote alternatives.

(Reservation) values detail the (minimal) worth of each alternative to each agent. Agents value alternatives to various degrees based on individual preferences and supply. We represent the \define{reservation value} of alternatives by Agent $u$ with a real-valued function on domain $\alternatives$, equivalently, and conveniently, written as a column vector $\X_u \in \Rmax^d$ that collects the values of all alternatives to Agent $u$. Value vectors may change over time depending on interactions with other agents such as negotiation or bargaining. We assume values are normalized with respect to a standardized currency. If an agent doesn't include Alternative $i$ in negotiations, we set $[\X_u]_i = -\infty$.

Agents can propose transactions with other agents. An agent that proposes the transaction is called a \define{seller} and an agent who may accept a transaction is called a \define{bidder}.  We assume all agents may assume the role of a bidder or a seller and may openly establish trade relationships with other agents. Transactions, therefore, take place in a decentralized fashion. A \define{transaction} consists of the exchange of an Alternative $i$ for another Alternative $j$ less some cost which we call a \define{transaction cost}. We assume transaction costs are dependent on: the alternative to be traded (Alternative $i$), the alternative to be acquired (Alternative $j$), the identity of the seller (Agent $u$), and the identity of the bidder (Agent $v$). We model transaction costs via max-plus matrices. Specifically, for the seller to exchange Alternative $j$ for Alternative $i$ with the bidder, let $\A_{u,v} \in \Rmax^{d \times d}$ be a matrix with entries $\left[ \A_{u,v} \right]_{i,j}$ representing a transaction value (negative transaction cost) of Agent $u$ trading Alternative $j$ for Alternative $i$ with Agent $v$ (see Table \ref{table:transaction-costs}).

\begin{table}
\centering
\begin{tabular}{cl}
\toprule
$[\A_{u,v}]_{i,j}$ & Value for exchanging $j$ for $i$ \\
\midrule
$-\infty$ & exchange cannot be made \\
$< 0$ & transaction cost for exchange \\
$0$ & no transaction cost \\
$> 0$ & transaction value (e.g.~subsidy) for exchange \\
\bottomrule
\end{tabular}
\caption{}
\label{table:transaction-costs}
\end{table}

We define the \define{effective (supply) value} of Alternative $i$ to the seller relative to negotiations with the bidder as
\begin{align}
    \left[ \A_{u,v} \jointimes \X_u \right]_i = \max_{j=1}^{d} \Bigl\{[\X_u]_j + [\A_{u,v}]_{i,j} \Bigr\} \label{eq:relative-demand}
\end{align}
where $[\X_u]_i$ is the reservation value of Alternative $i$.
For instance, if Agent $u$ was to participate in an exchange with Agent $v$, the quantity \eqref{eq:relative-demand} describes the realized value of the exchange(s) with the highest value less transaction costs. In this paper, we are interested in the equilibrium condition
\begin{align}
    \max_{j=1}^{d} \{ [\X_u]_j + [\A_{u,v}]_{i,j} \} = \max_{j=1}^{d} \{ [\X_v]_j + [\A_{v,u}]_{i,j} \} \forall i \in \alternatives, \nonumber
\end{align}
which we call the \define{value equation}. In matrix form, the value equation can be rewritten as
\begin{align}
        \A_{u,v} \jointimes \X_{u} = \A_{v,u} \jointimes \X_v .\label{eq:demand-matrix}
\end{align}


As noted in the introduction, the intuition behind the equilibrium condition is deceptively simple---it is not possible to make any additional voluntary exchanges, because every feasible exchange would make at least one participant worse off.

\begin{problem}[Effective Value Equilibrium] \label{prob:exact}
    In a decentralized trading system, determine a value of every alternative for every agent such that the effective value equation \eqref{eq:demand-matrix} is satisfied for every alternative and every pair of trading partners.
\end{problem}

The equilibrium condition \eqref{eq:demand-matrix}, mathematically, is always feasible, i.e.~$\x = \y = [-\infty~-\infty \cdots -\infty]^{\top}$ is a solution to the two sided equation $\A \jointimes \x = \B \jointimes \y$. However, finite solutions are not guaranteed \cite{butkovic2006}. Thus, we relax the value equation \eqref{eq:demand-matrix} to
\begin{align}
    \| \A_{u,v} \jointimes \X_{u} - \A_{v,u} \jointimes \X_v \|_{\infty} \leq \epsilon . \label{eq:approx-demand}
\end{align}

\begin{problem}[Approximate Value Equilibrium]\label{prob:approximate}
    In a decentralized trading system, determine a value of every alternative for every agent such that the approximate value equation \eqref{eq:approx-demand} is satisfied for every pair of trading partners.
\end{problem}

\section{Decentralized Solution}
\label{sec:solution}

We model economic relationships between agents by an undirected graph $\graph = (\nodes,\edges)$. The edges in this trade network are pairs of agents $(u,v) \in \edges$ such that Agent $u$ can propose a trade with Agent $v$. We assume that the possibility of trade between two agents is a symmetric relationship. We also define the (graph) neighborhood $\neighbors_u = \{v \in \nodes~\vert~(u,v) \in \edges \}$ of Agent $u \in \nodes$ as the set of all possible trading partners of Agent $u$.

We introduce two (edge) weightings on $\graph$, a scalar weighting and a matrix weighting: $\weights : \nodes \times \nodes \to \Rmin$ so that $(u,v) \mapsto [\W]_{u,v}$, where
$\W \in \Rmin^{N \times N}$, and $\Weights : \nodes \times \nodes \to \Rmax^{d \times d}$ so that $(u,v) \mapsto \A_{u,v}$. We make the following assumptions concerning weights.

\begin{assumption} \label{assume:scalar}
    The matrix $\W$ is symmetric, and has the sparsity pattern of $\graph = (\nodes,\edges)$, i.e.~$[\W]_{u,v} < \infty$ if $(u,v) \in \edges$, otherwise  $[\W]_{u,v}=\infty$.
\end{assumption}


\begin{assumption} \label{assume:matrix}
    For all $(u,v) \in \edges$, the matrices $\A_{u,v} \in \Rmax^{d \times d}$ and $\A_{v,u} \in \Rmax^{d \times d}$ are doubly G-astic.
\end{assumption}

\subsection{The tropical Tarski Laplacian}
\label{sec:tarski}

Given the data $\graph = (\nodes,\edges,\weights,\Weights)$, we introduce an operator on global value vectors which we call the \define{tropical Tarski Laplacian}. Without the scalar weighting term, the definition of the tropical Tarski Laplacian below is a specialization of the Tarski Laplacian defined in previous works \cite{riess2022,ghrist2022,riess2022b}. 

\begin{definition}[Tropical Tarski Laplacian] \label{def:tarski}
    Suppose $\graph = (\nodes,\edges,\weights,\Weights)$, and suppose $\X \in \left(\Rmax^d \right)^N$.
    The \define{tropical Tarski Laplacian} is an operator $\Laplacian : \left(\Rmax^{d}\right)^N \longrightarrow \left(\Rmax^{d}\right)^N$ defined block-wise
    \begin{align}
        \Laplacian(\X)_u = \bigmeetplus_{v \in \neighbors_u} [\W]_{u,v} +' \inv{\A}_{u,v} \meettimes \left( \A_{v,u} \jointimes \X_v \right). \label{eq:tarski}
    \end{align}
\end{definition}


\begin{proposition} \label{prop:opt}
Suppose Agent $u$ has neighbors with value-vectors $\{\X_v \}_{v \in \neighbors_u}$. Then, $\Laplacian(\X)_u$ is the optimal solution to the following bi-level optimization problem
   \begin{align*}
    & \bigmeet_{v \in \neighbors_u} [\W]_{u,v} +' \Z_v \\
    & \text{subject to} \\
    & \Z_v \in \mathrm{argmax} \{ \Y \in \Rmax^d~\vert~\A_{u,v} \jointimes \Y \preceq \A_{v,u} \jointimes \X_v \}.
    \end{align*}
\end{proposition}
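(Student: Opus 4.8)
The plan is to reduce the bi-level problem to a direct application of the residuation (adjunction) relation of Lemma~\ref{lem:adjoint}, handling the inner and outer levels separately. First I would fix a neighbor $v \in \neighbors_u$ and analyze the inner feasible set $F_v = \{\Y \in \Rmax^d \mid \A_{u,v} \jointimes \Y \preceq \A_{v,u} \jointimes \X_v\}$. Abbreviating by $\Z_v^\ast = \inv{\A}_{u,v} \meettimes (\A_{v,u} \jointimes \X_v)$ the principal solution associated with the right-hand side $\A_{v,u} \jointimes \X_v$, Lemma~\ref{lem:adjoint}(1) applied with $\A = \A_{u,v}$ gives the equivalence $\A_{u,v} \jointimes \Y \preceq \A_{v,u} \jointimes \X_v \iff \Y \preceq \Z_v^\ast$. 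Hence $F_v$ is exactly the principal down-set $\{\Y \mid \Y \preceq \Z_v^\ast\}$.

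Next I would show that the inner $\argmax$ is the singleton $\{\Z_v^\ast\}$. Lemma~\ref{lem:adjoint}(2), with $\y = \A_{v,u} \jointimes \X_v$, yields $\A_{u,v} \jointimes \Z_v^\ast \preceq \A_{v,u} \jointimes \X_v$, so $\Z_v^\ast \in F_v$; together with the previous step this makes $\Z_v^\ast$ the greatest element of $F_v$ in the product order. A down-set possessing a top element has that top as its unique maximal element, so the constraint $\Z_v \in \argmax F_v$ forces $\Z_v = \Z_v^\ast$ for every $v$. This is precisely the statement, recalled after Lemma~\ref{lem:adjoint}, that the principal solution is the greatest subsolution.

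Finally I would substitute these uniquely determined inner optima into the outer level. Because each feasible $\Z_v$ is pinned to $\Z_v^\ast$, the outer feasible region collapses to a single point, and its objective evaluates to $\bigmeet_{v \in \neighbors_u} [\W]_{u,v} +' \Z_v^\ast = \bigmeet_{v \in \neighbors_u} [\W]_{u,v} +' \inv{\A}_{u,v} \meettimes (\A_{v,u} \jointimes \X_v)$, which is exactly $\Laplacian(\X)_u$ by Definition~\ref{def:tarski}. No genuine outer optimization remains, and the claimed identity follows.

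The one point requiring care — the main, if minor, obstacle — is that Lemma~\ref{lem:adjoint} is stated for finite matrices $\A \in \R^{m\times n}$, whereas the blocks $\A_{u,v}$ may carry $-\infty$ entries. I would address this by invoking the adjunction in its general residuation-theoretic form: the maps $\Y \mapsto \A_{u,v} \jointimes \Y$ and $\z \mapsto \inv{\A}_{u,v} \meettimes \z$ constitute a Galois connection for any $\A_{u,v} \in \Rmax^{d\times d}$, and Assumption~\ref{assume:matrix} (double G-asticity) keeps the relevant residuals finite, so both the equivalence of Step~1 and the feasibility of $\Z_v^\ast$ in Step~2 persist verbatim.
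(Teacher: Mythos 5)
Your proposal is correct. The paper gives no explicit proof of Proposition~\ref{prop:opt} (it is stated as an immediate consequence of residuation), and your argument is precisely the intended one: Lemma~\ref{lem:adjoint}(1) identifies the inner feasible set with the down-set of the principal solution $\inv{\A}_{u,v} \meettimes (\A_{v,u} \jointimes \X_v)$, which is therefore the unique maximal element, and substituting into the outer objective recovers Definition~\ref{def:tarski} verbatim. Your closing remark about extending Lemma~\ref{lem:adjoint} from $\R^{m\times n}$ to matrices with $-\infty$ entries via the general Galois-connection form is a legitimate point of care that the paper itself glosses over.
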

The tropical Tarski Laplacian, thus, is a aggregation mechanism for values trading partners assign to alternatives.  

\subsection{The Heat Equation}
\label{sec:heat}

Given the data $\graph = (\nodes,\edges,\weights,\Weights)$ and initial condition $\X(0) \in \left(\Rmax^{d}\right)^N$, we define a discrete-time time-invariant dynamical system called the \define{heat equation} as
\begin{align} 
    \X(t+1) &=& \Laplacian \left( \X(t) \right) \meet \X(t). \label{eq:heat}
\end{align}

Locally, trajectories of the heat equation consist of the following steps (Algorithm \ref{alg:sync}): a bidder communicates their effective supply values to a seller who performs a residuation (Line 6), the seller re-scales the effective supply values by $[\W]_{u,v}$ (Line 7), the seller aggregates the resulting values by computing a meet (Line 10), and the seller updates her supply value by computing the meet of the resulting value and her prior value, $\X_u$ (Line 12).
Finally, we impose a stopping condition (Line 2) 
\begin{align}
        \| \A_{u,v} \jointimes \X_u - \A_{v,u} \jointimes \X_v \|_{\infty} \leq [\W]_{u,v} \quad \forall (u,v) \in \edges,
        \nonumber
    \end{align}
which we enforce using the following loss function
\begin{align}
    \ell(\X) &=& \max_{(u,v) \in \edges} \| \A_{v,u} \jointimes \X_v - \A_{u,v} \jointimes \X_u \|_{\infty}. \label{eq:loss}
\end{align}

In Section \ref{sec:solution}, we analyze the convergence of Algorithm \ref{alg:sync}. While convergence is not guaranteed \emph{a priori}, provided the algorithm terminates, the algorithm returns a solution to Problem \ref{prob:approximate} because $\ell(\X) \leq \epsilon$ if and only if
\eqref{eq:approx-demand} is satisfied.

\begin{algorithm}[t]
\caption{ RRAggU} \label{alg:sync}
\SetKwComment{Comment}{/* }{ */}
\KwData{$\graph = (\nodes,\edges,\Weights,\weights)$; $\epsilon > 0$; $\X \in \Rmax^{N \cdot d}$}
\KwResult{$\X \in \Rmax^{N \cdot d}$}
$\mathsf{loss} = \infty$ \\
\While{$\mathsf{loss} > \epsilon$}{
    \ForPar{$u \in \nodes$}{
    $\z \gets [\infty~\infty~\cdots~\infty]^{\top}$ \\
    \For{$v \in \neighbors_u$}{
        $\mathsf{Residuate}(u,v) \gets \inv{\A}_{u,v} \meettimes (\A_{v,u} \jointimes \X_v) $\\
        $\mathsf{Rescale}(u,v) \gets [\W]_{u,v} +' \mathsf{Residuate}(u,v)$ \\
        $\z \gets \z \meet \mathsf{Rescale}(u,v)$
    } 
    $\mathsf{Aggregate}(\neighbors_u) \gets \z$ \\
    $\mathsf{Update}(u) \gets \X_u \meet \mathsf{Aggregate}(\neighbors_u)$ \\
    $\X_u \gets \mathsf{Update}(u)$
    }
    $\mathsf{loss} \gets \ell(\X)$
    }
\end{algorithm}

\section{Convergence Analysis and Equilibria}
\label{sec:solutions}

\noindent In this section, we
\begin{enumerate}
    \item Analyze the convergence of Algorithm \ref{alg:sync}; \\
    \item Supply an algebraic characterization of the time-invariant solutions of the heat equation,
        \begin{align}
            \semimod = \{ \X \in \left(\Rmax^{d}\right)^N~\vert~\Laplacian(\X) \meet \X = \X \};\label{eq:solutions}
        \end{align}
    \item Show that the solutions of the heat equation are solutions of the global approximate value equation
\begin{align}
\| \A_{u,v} \jointimes \X_{u} - \A_{v,u} \jointimes \X_v \|_{\infty} \leq \epsilon \quad \forall (u,v) \in \edges. \label{eq:equilibrium-global}
\end{align}
\end{enumerate}

\subsection{Convergence of the heat equation}
\label{sec:convergence}
We show the updates of the heat equation exhibit contracting behavior, in the limit. 
For this analysis, we review a few standard definitions \cite{gunawardena1995,cochet1999}.
A map $\F: \R^n \to \R^n$ is
    \begin{enumerate}
    \item \define{Monotonic} if $\x \preceq \y$ implies $\F(\x) \preceq \F(\y)$,
    \item Additively \define{homogeneous} if $\F(\x + \alpha) = \F(\x) + \alpha$ for all $\x \in \R^n$, $\alpha \in \R$,
    \item \define{Non-expansive} if $\|\F(\x) - \F(\y)\|_{\infty} \preceq \| \x - \y \|_{\infty}$ for all $\x, \y \in \R^n$.
\end{enumerate}

For the following, assume $\graph = \bigl(\nodes,\edges,\weights,\Weights \bigr)$ satisfies Assumptions \ref{assume:scalar} and \ref{assume:matrix}, and $\X_u(0)$ is finite for every $u \in \{1,2,\dots, N\}$.

\begin{proposition} \label{prop:topical}
     Suppose $\F: \left(\R^{d}\right)^N \longrightarrow \left(\R^{d}\right)^N$ is defined $\F(\X) = \Laplacian(\X) \meet \X$. Then, under the above assumptions, $\F$ is non-expansive.
\end{proposition}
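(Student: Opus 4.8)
The plan is to show that $\F$ is non-expansive by decomposing it into compositions and combinations of maps that are each known to be non-expansive, and then invoking closure properties of the class of non-expansive maps. Concretely, I would first establish the standard lemma that (i) the entrywise meet $\x \mapsto \x \meet \y$ (for fixed $\y$), the entrywise meet of two non-expansive maps $(\F_1 \meet \F_2)(\x) = \F_1(\x) \meet \F_2(\x)$, and more generally any min or max over non-expansive maps, is again non-expansive with respect to $\|\cdot\|_\infty$; and (ii) additive shifts $\x \mapsto \alpha +' \x$ are non-expansive (indeed isometries). These are elementary consequences of the inequality $|\min\{a,b\} - \min\{a',b'\}| \leq \max\{|a-a'|, |b-b'|\}$ and its max-analogue.

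The crux reduces to showing that the building blocks of $\Laplacian$ are non-expansive. I would argue that each block map $\X_v \mapsto \A_{v,u} \jointimes \X_v$ and $\y \mapsto \inv{\A}_{u,v} \meettimes \y$ is non-expansive. The cleanest route is to observe that max-plus (resp. min-plus) matrix multiplication by a fixed matrix is monotonic (Lemma \ref{lem:nonlinear}, part 3) and additively homogeneous (Lemma \ref{lem:nonlinear}, part 2, and its min-plus analogue); by a standard result \cite{gunawardena1995,cochet1999}, any map that is both monotonic and additively homogeneous is automatically non-expansive in the $\ell_\infty$ norm. This ``topical maps are non-expansive'' fact is exactly why the proposition is labeled \texttt{prop:topical}. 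I would state it as a short lemma (or cite it directly) and verify that each constituent map satisfies the two hypotheses, which follow immediately from Lemma \ref{lem:nonlinear} and the analogous min-plus statements. The rescaling by $[\W]_{u,v}$ is an additive shift, hence harmless.

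With the pieces in place, I would assemble the argument: for each $u$, the inner expression $\inv{\A}_{u,v} \meettimes (\A_{v,u} \jointimes \X_v)$ is a composition of two topical (hence non-expansive) maps, so it is non-expansive in $\X_v$; adding $[\W]_{u,v}$ preserves this; taking the meet over $v \in \neighbors_u$ preserves non-expansiveness; thus $\X \mapsto \Laplacian(\X)_u$ is non-expansive for each block, and therefore $\Laplacian$ is non-expansive. Finally, $\F(\X) = \Laplacian(\X) \meet \X$ is the entrywise meet of the non-expansive map $\Laplacian$ with the (trivially non-expansive) identity, so $\F$ is non-expansive, as claimed.

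The main obstacle I anticipate is bookkeeping around the domains and the $-\infty$ entries rather than any deep difficulty: the propositions about non-expansiveness are stated for maps on $\R^n$ (finite-valued), whereas $\Laplacian$ a priori acts on $\Rmax$-valued vectors. The hypothesis that each $\X_u(0)$ is finite, together with the doubly G-astic assumption (Assumption \ref{assume:matrix}) guaranteeing that $\A_{v,u} \jointimes \X_v$ and its residuation stay finite, is what keeps every iterate in $\R^d$ so that the $\ell_\infty$ estimates make sense; I would make sure to invoke these assumptions explicitly to justify working in $\R^{Nd}$ throughout, so that ``topical $\Rightarrow$ non-expansive'' applies cleanly to each block.
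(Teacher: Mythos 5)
Your proposal is correct and rests on the same key fact as the paper's proof — the Crandall--Tartar/Gunawardena lemma that monotone plus additively homogeneous implies $\ell_\infty$-non-expansive, with monotonicity and homogeneity of the max-plus/min-plus blocks verified via Lemma \ref{lem:nonlinear} and its residuated analogue. The only difference is organizational: the paper verifies monotonicity and homogeneity of $\Laplacian$ (hence of $\F$) as a whole and applies the lemma once, whereas you apply it blockwise and then invoke closure of non-expansive maps under composition and meets; both are valid.
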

\begin{proof}
    See Appendix.
\end{proof}

Let $\X(t)$ be a trajectory of the heat equation with initial condition $\X(0)$. Then, using Proposition \ref{prop:topical} we can show the following result.
\begin{theorem} \label{thm:convergence}
There exists a scalar $\alpha\geq 0$ so that
\begin{align*}
    \lim_{t \to \infty} \| \X(t) - \X(t+1) \|_{\infty} = \alpha.
\end{align*}
\end{theorem}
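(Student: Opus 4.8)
The plan is to track the sequence of one-step displacements $a_t := \|\X(t) - \X(t+1)\|_{\infty}$ and show it is monotonically non-increasing and bounded below, so that its limit exists by the monotone convergence theorem; the scalar $\alpha$ is then simply $\alpha = \inf_{t} a_t = \lim_{t \to \infty} a_t \geq 0$.

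First I would verify that the entire trajectory stays finite, i.e.~$\X(t) \in \left(\R^{d}\right)^N$ for every $t$, so that $\F$ restricted to the trajectory agrees with the map of Proposition~\ref{prop:topical} and the sup-norm is well defined. I would argue by induction, with base case given by the finiteness of $\X(0)$: if $\X_v(t)$ is finite, then $\A_{v,u} \jointimes \X_v(t)$ is finite because every row of $\A_{v,u}$ has a finite entry (Assumption~\ref{assume:matrix}); the residuation $\inv{\A}_{u,v} \meettimes (\cdot)$ is then finite because every column of $\A_{u,v}$ has a finite entry; rescaling by the finite weight $[\W]_{u,v}$ (Assumption~\ref{assume:scalar}, since $(u,v) \in \edges$) and taking the meet over $\neighbors_u$ preserve finiteness, as does the final meet with $\X_u(t)$. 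Hence $\X(t+1) = \F(\X(t))$ is again finite.

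Next, since $\X(t+1) = \F(\X(t))$ for all $t$, applying the non-expansiveness of $\F$ from Proposition~\ref{prop:topical} to the pair $\X(t), \X(t+1)$ yields
\begin{align*}
    a_{t+1} = \| \F(\X(t)) - \F(\X(t+1)) \|_{\infty} \leq \| \X(t) - \X(t+1) \|_{\infty} = a_t,
\end{align*}
so $(a_t)_{t \geq 0}$ is non-increasing. Because each $a_t$ is a norm, it is bounded below by $0$, and therefore the sequence converges to its infimum $\alpha \geq 0$, which completes the argument.

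The only genuinely delicate point is the finiteness step: Proposition~\ref{prop:topical} is stated for maps on $\left(\R^{d}\right)^N$, so the non-expansive estimate is not directly available over the extended reals $\Rmax$, and everything hinges on Assumption~\ref{assume:matrix} (doubly G-astic) to guarantee that residuation never introduces $\pm \infty$ and thereby keeps the trajectory in the finite regime where both the sup-norm and Proposition~\ref{prop:topical} apply. Once finiteness is secured, the remainder is a routine appeal to monotone convergence.
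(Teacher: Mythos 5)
Your proof is correct and follows essentially the same route as the paper: apply the non-expansiveness of $\F$ from Proposition~\ref{prop:topical} to consecutive iterates to get a non-increasing sequence of displacements bounded below by zero, then invoke monotone convergence. The only difference is that you explicitly verify the trajectory remains finite (via the doubly G-astic assumption) so that Proposition~\ref{prop:topical} applies, a point the paper leaves implicit in its standing assumptions.
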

\begin{proof}
See Appendix.
\end{proof}

Note that if $\alpha > 0$, then $\X(t)$ does not converge. In this case, Theorem \ref{thm:convergence} says that the $\ell_\infty$-distance between consecutive iterates of the heat equation asymptotically approaches a fixed value $\alpha > 0$, which depends on the data $\graph = \bigl(\nodes,\edges,\weights,\Weights \bigr)$ and $\X(0)$. This case has an interesting economic interpretation, which we discuss in Section \ref{sec:experiments}.
On the other hand, if $\alpha = 0$, then $\| \X(t+1) - \X(t) \|_{\infty} \to 0$ as $t \to \infty$. It follows the value of every alternative determined by each agent converges point-wise, giving rise to the following result, whose proof is trivial.
\begin{corollary} \label{cor:converge}
    Suppose $\| \X(t+1) - \X(t) \|_{\infty} \to 0$ as $t \to \infty$. Then,
    \begin{align*}
        \lim_{t \to \infty} [\X_u(t+1)]_i - [\X_u(t)]_i = 0
    \end{align*}
    for all $u \in \nodes,~i \in \alternatives$.
\end{corollary}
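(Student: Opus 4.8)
The plan is to recognize this corollary as the elementary fact that convergence in the $\ell_\infty$ norm implies component-wise convergence, specialized here to the stacked vectors of the product semimodule $\left(\Rmax^{d}\right)^N$. First I would unpack the norm appearing in the hypothesis: writing each iterate $\X(t)$ as a vector of $N \cdot d$ entries indexed by pairs $(u,i) \in \nodes \times \alternatives$, the quantity $\|\X(t+1) - \X(t)\|_{\infty}$ is by definition the maximum over all such pairs of the absolute component differences $\bigl| [\X_u(t+1)]_i - [\X_u(t)]_i \bigr|$.

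Next I would fix an arbitrary agent $u \in \nodes$ and alternative $i \in \alternatives$ and invoke the trivial domination
\begin{align*}
    0 \leq \bigl| [\X_u(t+1)]_i - [\X_u(t)]_i \bigr| \leq \|\X(t+1) - \X(t)\|_{\infty},
\end{align*}
valid because any single term of a maximum of non-negative quantities is bounded above by that maximum. The squeeze theorem, together with the hypothesis $\|\X(t+1) - \X(t)\|_{\infty} \to 0$, then forces $\bigl| [\X_u(t+1)]_i - [\X_u(t)]_i \bigr| \to 0$; since a real sequence whose absolute values vanish must itself vanish, the signed difference $[\X_u(t+1)]_i - [\X_u(t)]_i$ tends to $0$. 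Because $u$ and $i$ were arbitrary, the claim holds for every $u \in \nodes$, $i \in \alternatives$.

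As the paper already observes, there is essentially no obstacle here. The only points I would make explicit are that the trajectory remains finite --- guaranteed by the standing finiteness assumption on $\X(0)$ together with the non-expansiveness of $\F$ in Proposition \ref{prop:topical}, so that the $\ell_\infty$ norm and the signed differences are genuinely well defined on $\left(\R^{d}\right)^N$ --- and that the passage from a vanishing absolute value to a vanishing signed difference is immediate. No structural property of the Tarski Laplacian $\Laplacian$ or of the underlying max-plus arithmetic is needed beyond the definition of the norm itself.
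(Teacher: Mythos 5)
Your proof is correct and is exactly the elementary argument the paper has in mind when it calls the corollary ``trivial'': the $\ell_\infty$ norm dominates every component-wise absolute difference, so the squeeze theorem gives component-wise convergence. The paper supplies no separate proof in the appendix, so there is nothing to diverge from; your added remarks on finiteness of the trajectory are a harmless (and reasonable) bit of extra care.
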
%

\subsection{Algebraic structure of equilibria}
\label{sec:semimod}

In this section, we characterize the stable manifold and set of solutions of the heat equation. Specifically, we define the stable manifold as follows.

\begin{definition}[Stable Manifold] \label{def:stable-manifold}
    Given $\alpha\geq0$, and the data $\graph = (\nodes,\edges,\weights,\Weights)$, $\X(0)$, the \define{stable manifold} of the heat equation \eqref{eq:heat} is the following subset: $\mathrm{Stab}_{\alpha}\bigl(\graph\bigr) = \{ \X \in (\R^{d})^N~\vert~ \| \F(\X) - \X \|_{\infty} \leq \alpha \}$.
\end{definition}

As discussed above, if $\alpha = 0$, the heat equation converges to a time-invariant solution.
The following result algebraically characterizes the set $\semimod$ of solutions.

\begin{theorem} \label{thm:semimodule}
    $\semimod$ forms a subsemimodule of $\left(\Rmax^{d}\right)^N$.
\end{theorem}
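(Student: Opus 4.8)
The plan is to recognize the defining condition of $\semimod$ as a post-fixed-point condition and to exploit two structural properties of the tropical Tarski Laplacian: monotonicity and additive homogeneity. First I would rewrite the membership condition. Since the meet $\meet$ is the componentwise minimum, the equality $\Laplacian(\X) \meet \X = \X$ holds if and only if $\X \preceq \Laplacian(\X)$. Thus $\semimod = \{ \X \in (\Rmax^{d})^N~\vert~\X \preceq \Laplacian(\X) \}$ is exactly the set of post-fixed points of $\Laplacian$, and the whole argument reduces to showing that this set is closed under joins and scalar addition.

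Next I would establish the two properties of $\Laplacian$. Monotonicity follows because the block \eqref{eq:tarski} is a composition of order-preserving maps: $\X_v \mapsto \A_{v,u} \jointimes \X_v$ is monotone by Lemma \ref{lem:nonlinear}(3); min-plus left-multiplication by $\inv{\A}_{u,v}$ is monotone by the min-plus analogue of that lemma; adding the scalar $[\W]_{u,v}$ via $+'$ preserves order; and the block-wise meet $\bigmeetplus_{v \in \neighbors_u}$ preserves order. Additive homogeneity, $\Laplacian(\alpha + \X) = \alpha + \Laplacian(\X)$ for $\alpha \in \Rmax$, follows by pushing the scalar through each stage: Lemma \ref{lem:nonlinear}(2) gives $\A_{v,u} \jointimes (\alpha + \X_v) = \alpha + (\A_{v,u} \jointimes \X_v)$, the min-plus product and the scalar operation $+'$ likewise commute with adding $\alpha$, and so does the meet.

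With these in hand the closure statements are immediate. For scalar multiplication, if $\X \preceq \Laplacian(\X)$ then $\alpha + \X \preceq \alpha + \Laplacian(\X) = \Laplacian(\alpha + \X)$, so $\alpha + \X \in \semimod$. For joins, suppose $\X, \Y \in \semimod$. Since $\X \preceq \X \join \Y$, monotonicity gives $\X \preceq \Laplacian(\X) \preceq \Laplacian(\X \join \Y)$, and symmetrically $\Y \preceq \Laplacian(\X \join \Y)$; as $\X \join \Y$ is the least upper bound of $\X$ and $\Y$, we obtain $\X \join \Y \preceq \Laplacian(\X \join \Y)$, hence $\X \join \Y \in \semimod$. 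Closure under an arbitrary finite max-plus linear combination $\bigjoin_{k} (\alpha_k + \X_k)$ then follows by combining these two facts inductively, and since $\bot$ is the least element, $\bot \preceq \Laplacian(\bot)$ trivially, so the additive identity lies in $\semimod$. Therefore $\semimod$ is a subsemimodule of $(\Rmax^{d})^N$.

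I expect the main obstacle to be conceptual rather than computational: the Tarski Laplacian need not preserve joins, because the residuation step $\inv{\A}_{u,v} \meettimes (\cdot)$ does not distribute over $\join$, so one cannot argue via $\Laplacian(\X \join \Y) = \Laplacian(\X) \join \Laplacian(\Y)$. The key realization is that only the inequality $\Laplacian(\X) \join \Laplacian(\Y) \preceq \Laplacian(\X \join \Y)$ is needed, and this weaker direction is already supplied by monotonicity alone; it is exactly the direction compatible with the post-fixed-point characterization. The only place demanding care is the bookkeeping of monotonicity and homogeneity across the alternating $\Rmax$ and $\Rmin$ operations in \eqref{eq:tarski}.
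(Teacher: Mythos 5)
Your proof is correct, but it takes a genuinely different route from the paper's for the join-closure step. You first observe that $\Laplacian(\X) \meet \X = \X$ is equivalent to the post-fixed-point condition $\X \preceq \Laplacian(\X)$, and then close under joins using only monotonicity of $\Laplacian$ together with the universal property of the least upper bound: $\X \preceq \Laplacian(\X) \preceq \Laplacian(\X \join \Y)$ and symmetrically for $\Y$, hence $\X \join \Y \preceq \Laplacian(\X \join \Y)$. This is the standard Knaster--Tarski-style fact that the post-fixed points of a monotone self-map on a lattice form a sub-join-semilattice, and it needs nothing about the tropical structure beyond the monotonicity and homogeneity already established in the proof of Proposition \ref{prop:topical}. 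The paper instead argues computationally: it expands $\Laplacian(\X \join \Y)_u$ via distributivity of $\jointimes$ over $\join$ (Lemma \ref{lem:nonlinear}-1), translates the hypotheses $\X, \Y \in \semimod$ through the adjunction (Lemma \ref{lem:adjoint}) into lower bounds of the form $\A_{v,u} \jointimes \X_v \succeq \A_{u,v} \jointimes (\X_u - [\W]_{u,v})$, substitutes these back in by monotonicity, and finishes with $\inv{\A}_{u,v} \meettimes (\A_{u,v} \jointimes \z) \succeq \z$. Your argument is shorter, avoids the residuation bookkeeping entirely, and generalizes to any monotone operator; the paper's version is heavier but makes the role of the adjunction explicit and produces the intermediate bound $\Laplacian(\X \join \Y)_u \succeq \bigmeet_{v \in \neighbors_u} \inv{\A}_{u,v} \meettimes (\A_{u,v} \jointimes (\X_u \join \Y_u))$. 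The scalar-addition part of your proof coincides with the paper's. You also correctly diagnose the one conceptual trap --- that $\Laplacian$ need not preserve joins because residuation does not distribute over $\join$ --- and correctly note that only the inequality $\Laplacian(\X) \join \Laplacian(\Y) \preceq \Laplacian(\X \join \Y)$, which monotonicity supplies, is actually needed.
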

\begin{proof}
    See Appendix.
\end{proof}

Theorem \ref{thm:semimodule} implies that max-plus linear combinations of solutions remain solutions solutions. As we can always find an $\alpha \in \Rmax$ such that $\X + \alpha$ is positive, we can always produce a positive solution from an arbitrary solution by re-scaling. Theorem \ref{thm:semimodule} also implies, if $\X, \Y \in \semimod$, then the join $\X \join \Y$ reflects a solution with the values of each agent being the alternative-wise maximum of the corresponding values in $\X$ and $\Y$.

\subsection{Effective value equilibrium}
\label{sec:time-invariant}

In this section, we show that the set of solutions $\semimod$ satisfies the effective value equilibrium condition \eqref{eq:approx-demand}. Specifically, suppose $\graph = (\nodes, \edges, \weights, \Weights)$ and $\X \in \semimod$. Then, the $\ell_\infty$-distance between the effective values for each pair $(u,v) \in \edges$ of agents is bounded by $[\W]_{u,v}$. As a corollary, if $\epsilon = \max_{uv \in \edges} [\W]_{u,v}$, then time-invariant solutions of the heat equation are solutions to Problem \ref{prob:approximate}.

\begin{theorem} \label{thm:existence}
    Suppose $\X \in \semimod$. Then,
    \begin{align}
        \| \A_{u,v} \jointimes \X_u - \A_{v,u} \jointimes \X_v \|_{\infty} \leq [\W]_{u,v} \quad \forall (u,v) \in \edges.
        \nonumber
    \end{align}
\end{theorem}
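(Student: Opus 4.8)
The plan is to unpack the fixed-point condition defining $\semimod$ into a single per-neighbor inequality and then push it through the Galois connection of Lemma~\ref{lem:adjoint}. First I would observe that the membership condition $\Laplacian(\X) \meet \X = \X$ is simply the order relation $\X \preceq \Laplacian(\X)$, since $\meet$ is the componentwise infimum and a meet equals one of its arguments exactly when that argument sits below the other. Restricting to block $u$ and using that $\Laplacian(\X)_u$ is itself a meet over $\neighbors_u$, I get, for every individual neighbor $v \in \neighbors_u$,
\begin{align*}
\X_u \preceq \Laplacian(\X)_u \preceq [\W]_{u,v} +' \bigl( \inv{\A}_{u,v} \meettimes (\A_{v,u} \jointimes \X_v) \bigr),
\end{align*}
because a meet is below each of its arguments.

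Next I would peel off the scalar and residuate. Since min-plus scalar multiplication is ordinary addition, subtracting $[\W]_{u,v}$ gives $-[\W]_{u,v} + \X_u \preceq \inv{\A}_{u,v} \meettimes (\A_{v,u} \jointimes \X_v)$. Setting $\x = -[\W]_{u,v} + \X_u$ and $\y = \A_{v,u} \jointimes \X_v$, part~1 of Lemma~\ref{lem:adjoint} (the adjunction $\A \jointimes \x \preceq \y$ iff $\x \preceq \inv{\A} \meettimes \y$) converts this into $\A_{u,v} \jointimes (-[\W]_{u,v} + \X_u) \preceq \A_{v,u} \jointimes \X_v$. Additive homogeneity (Lemma~\ref{lem:nonlinear}, part~2) pulls the scalar back out, yielding the clean one-sided bound
\begin{align*}
\A_{u,v} \jointimes \X_u \preceq [\W]_{u,v} + \A_{v,u} \jointimes \X_v,
\end{align*}
that is, $[\A_{u,v} \jointimes \X_u]_i - [\A_{v,u} \jointimes \X_v]_i \leq [\W]_{u,v}$ for every alternative $i$.

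Finally, I would exploit symmetry to close the two-sided gap. Because $\graph$ is undirected and $\W$ is symmetric (Assumption~\ref{assume:scalar}), the edge $(u,v)$ is also $(v,u)$ and $u \in \neighbors_v$, so the identical argument with $u$ and $v$ interchanged yields $[\A_{v,u} \jointimes \X_v]_i - [\A_{u,v} \jointimes \X_u]_i \leq [\W]_{u,v}$. Combining the two one-sided inequalities bounds the absolute difference in each coordinate $i$ by $[\W]_{u,v}$, and taking the maximum over $i$ produces exactly $\| \A_{u,v} \jointimes \X_u - \A_{v,u} \jointimes \X_v \|_{\infty} \leq [\W]_{u,v}$, as claimed.

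I expect the main obstacle to be bookkeeping rather than conceptual: confirming that the residuation adjunction applies to the (possibly non-finite) doubly G-astic matrices $\A_{u,v}$, and that the differences $\A_{u,v} \jointimes \X_u - \A_{v,u} \jointimes \X_v$ are well-defined with no $\infty - \infty$ indeterminacies. The doubly G-astic hypothesis (Assumption~\ref{assume:matrix}), together with finiteness of the blocks $\X_u$, guarantees the effective values $\A_{v,u} \jointimes \X_v$ are finite, so the $\ell_\infty$ norm and all scalar subtractions are legitimate; verifying this is the one place where the structural assumptions are genuinely load-bearing.
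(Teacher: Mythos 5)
Your proof is correct and follows essentially the same route as the paper's: both unpack $\Laplacian(\X)\meet\X=\X$ into the per-neighbor inequality $\X_u \preceq [\W]_{u,v} +' \inv{\A}_{u,v}\meettimes(\A_{v,u}\jointimes\X_v)$ via the greatest-lower-bound property, push it through the adjunction of Lemma~\ref{lem:adjoint} together with additive homogeneity to get $\A_{u,v}\jointimes\X_u \preceq [\W]_{u,v} + \A_{v,u}\jointimes\X_v$, and close the two-sided bound by symmetry of $\W$. Your closing remark about finiteness and the doubly G-astic hypothesis is a useful sanity check that the paper leaves implicit.
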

\begin{proof}
    See Appendix.
\end{proof}
\begin{corollary} Suppose $\X \in \semimod$. Then,
\begin{align*}
        \| \A_{u,v} \jointimes \X_u - \A_{v,u} \jointimes \X_v \|_{\infty} \leq \epsilon \quad \forall (u,v) \in \edges.
    \end{align*}
\end{corollary}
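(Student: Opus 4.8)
The Corollary reduces in a single line to Theorem~\ref{thm:existence}: because $\epsilon = \max_{(u,v)\in\edges}[\W]_{u,v}$, we have $[\W]_{u,v}\leq\epsilon$ on every edge, so the per-edge estimate supplied by Theorem~\ref{thm:existence} immediately upgrades to the uniform bound $\|\A_{u,v}\jointimes\X_u-\A_{v,u}\jointimes\X_v\|_{\infty}\leq[\W]_{u,v}\leq\epsilon$. The substance therefore sits in Theorem~\ref{thm:existence}, so I will describe how I would prove that and then invoke it for the Corollary.

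First I would rewrite membership in $\semimod$ as an order inequality. Since the meet is componentwise, $\Laplacian(\X)\meet\X=\X$ holds if and only if $\X\preceq\Laplacian(\X)$. Unpacking Definition~\ref{def:tarski} block-wise and using that $\Laplacian(\X)_u$ is a meet over $v\in\neighbors_u$, this single relation splits into one inequality per edge, namely $\X_u\preceq[\W]_{u,v}+\inv{\A}_{u,v}\meettimes(\A_{v,u}\jointimes\X_v)$ for every $(u,v)\in\edges$.

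Next I would push each such inequality through the map $\A_{u,v}\jointimes(-)$. Monotonicity (Lemma~\ref{lem:nonlinear}(3)) gives $\A_{u,v}\jointimes\X_u\preceq\A_{u,v}\jointimes([\W]_{u,v}+\inv{\A}_{u,v}\meettimes(\A_{v,u}\jointimes\X_v))$; additive homogeneity (Lemma~\ref{lem:nonlinear}(2)) pulls the scalar $[\W]_{u,v}$ outside the product; and the adjunction inequality $\A_{u,v}\jointimes(\inv{\A}_{u,v}\meettimes\y)\preceq\y$ (Lemma~\ref{lem:adjoint}(2)) with $\y=\A_{v,u}\jointimes\X_v$ collapses the residuated term, yielding $\A_{u,v}\jointimes\X_u\preceq[\W]_{u,v}+\A_{v,u}\jointimes\X_v$, i.e.\ every coordinate of $\A_{u,v}\jointimes\X_u-\A_{v,u}\jointimes\X_v$ is at most $[\W]_{u,v}$. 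Invoking the symmetry of $\W$ and the undirectedness of $\graph$, the identical argument with $u$ and $v$ interchanged gives every coordinate of $\A_{v,u}\jointimes\X_v-\A_{u,v}\jointimes\X_u$ bounded by $[\W]_{v,u}=[\W]_{u,v}$. Combining the two inequalities confines every coordinate of the difference to $[-[\W]_{u,v},\,[\W]_{u,v}]$, which is exactly $\|\A_{u,v}\jointimes\X_u-\A_{v,u}\jointimes\X_v\|_{\infty}\leq[\W]_{u,v}$.

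The delicate part is not any single step but the bookkeeping of the scalar offset $[\W]_{u,v}$ as it travels through the min-plus rescaling and then through the max-plus map $\A_{u,v}\jointimes(-)$; homogeneity is precisely what lets it commute past the product cleanly, and the symmetry step must genuinely use that the edge is undirected with $\W$ symmetric. A secondary point I would check is well-definedness of the $\ell_\infty$ norm: Assumption~\ref{assume:matrix} (the $\A_{u,v}$ are doubly G-astic) together with finiteness of the initial data ensures the effective values $\A_{u,v}\jointimes\X_u$ are finite, so no $\pm\infty$ entries enter the difference and the stated bound is a genuine bound on real coordinates.
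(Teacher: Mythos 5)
Your proposal is correct and matches the paper's route: the Corollary is exactly the one-line reduction to Theorem~\ref{thm:existence} via $[\W]_{u,v}\leq\epsilon=\max_{(u,v)\in\edges}[\W]_{u,v}$, and your sketch of Theorem~\ref{thm:existence} itself follows the paper's argument step for step (unpack $\X\preceq\Laplacian(\X)$ into per-edge inequalities via the meet, transport through the residuation adjunction of Lemma~\ref{lem:adjoint}, symmetrize using Assumption~\ref{assume:scalar}, and combine the two one-sided bounds). The only cosmetic difference is that you pass through $\A_{u,v}\jointimes(-)$ using monotonicity plus the counit inequality of Lemma~\ref{lem:adjoint}(2), while the paper invokes the equivalence in Lemma~\ref{lem:adjoint}(1) directly; these are the same residuation fact.
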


\section{Numerical Experiments}
\label{sec:experiments}

We perform several simulations to affirm our theoretical findings as well as visualize the behavior of the heat equation dynamics \eqref{eq:heat}. In the following experiments, we generate a fixed Erd\H{o}s-R\'{e}nyi graph with $N=20$ nodes and probability $p=0.2$ of drawing an edge between nodes $u, v \in \nodes$. Selecting $d=10$ alternatives, we, then, generate (max-plus) matrices $\A_{u,v} \in \R^{d \times d}$ whenever $(u,v) \in \edges$ by selecting entries uniformly at random in $[-1,1]$. Next, we generate (min-plus) scalar edge weights by selecting entries of $\W \in \R^{N \times N}$ uniformly at random in $[0,1]$ if $(u,v) \in \edges$, otherwise, setting $[\W]_{u,v} = \infty$. In a series of $n_{\text{trials}} = 20$ trials, we generate $\X(0) \in \left(\R^{d}\right)^N$ uniformly at random (again from $[-1,1]$) and apply Algorithm \ref{alg:sync} to each initial condition for $t = 0, 1, \dots, 10$ steps (ignoring the stopping condition). For each trajectory, we calculate the loss
\begin{align*}
    \ell\bigl( \X(t) \bigr) &=& \max_{(u,v) \in \edges} \| \A_{u,v} \jointimes \X_u - \A_{v,u} \jointimes \X_v \|_{\infty},
\end{align*}
and the discrete gradient, $\alpha(t)  = \| \X(t+1) - \X(t) \|_{\infty}$.

\begin{figure}[h]
\centering
\begin{subfigure}[b]{0.5\textwidth}
    \centering
    \includegraphics[width=\textwidth]{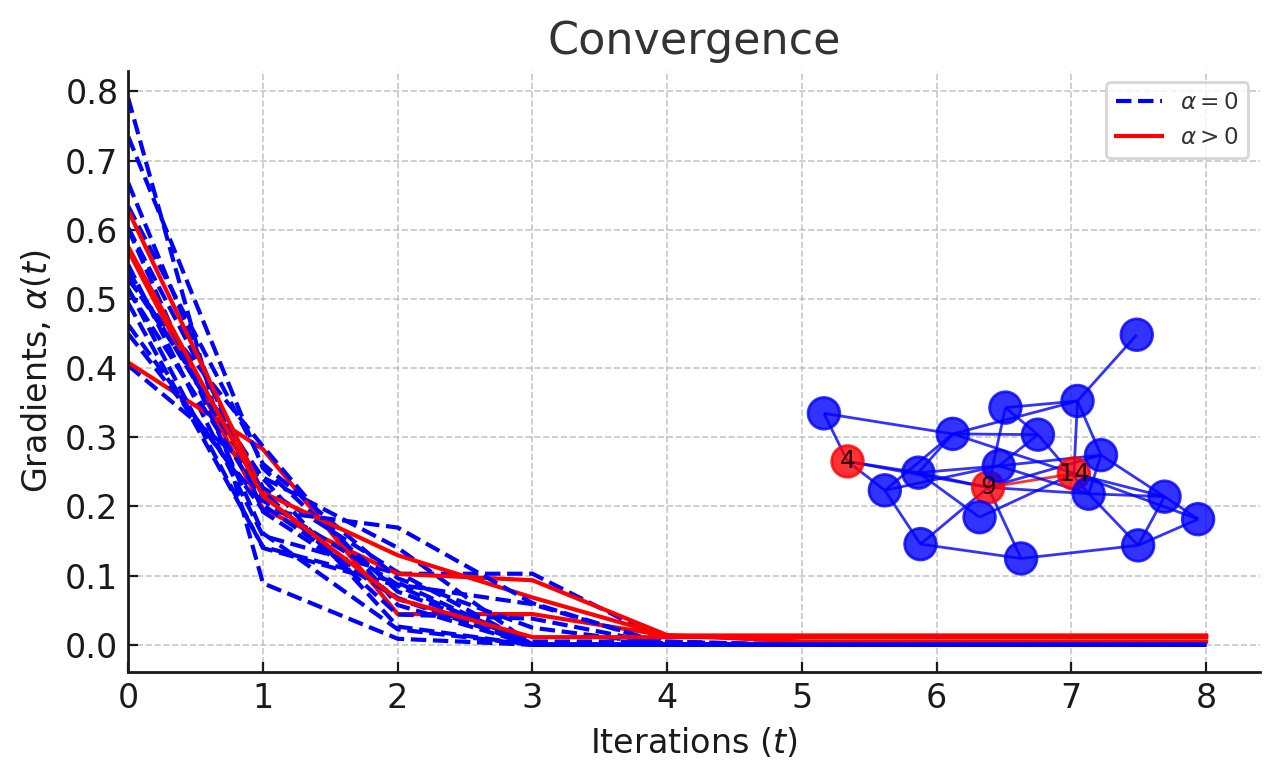}
    \caption{Gradients of trajectories $\alpha(t)$; in sample trajectory ($\alpha>0$), red nodes do not converge.}
    \label{fig:convergence}
\end{subfigure}
\begin{subfigure}[b]{0.5\textwidth}
    \centering
    \hspace{-1cm} \includegraphics[width=\textwidth]{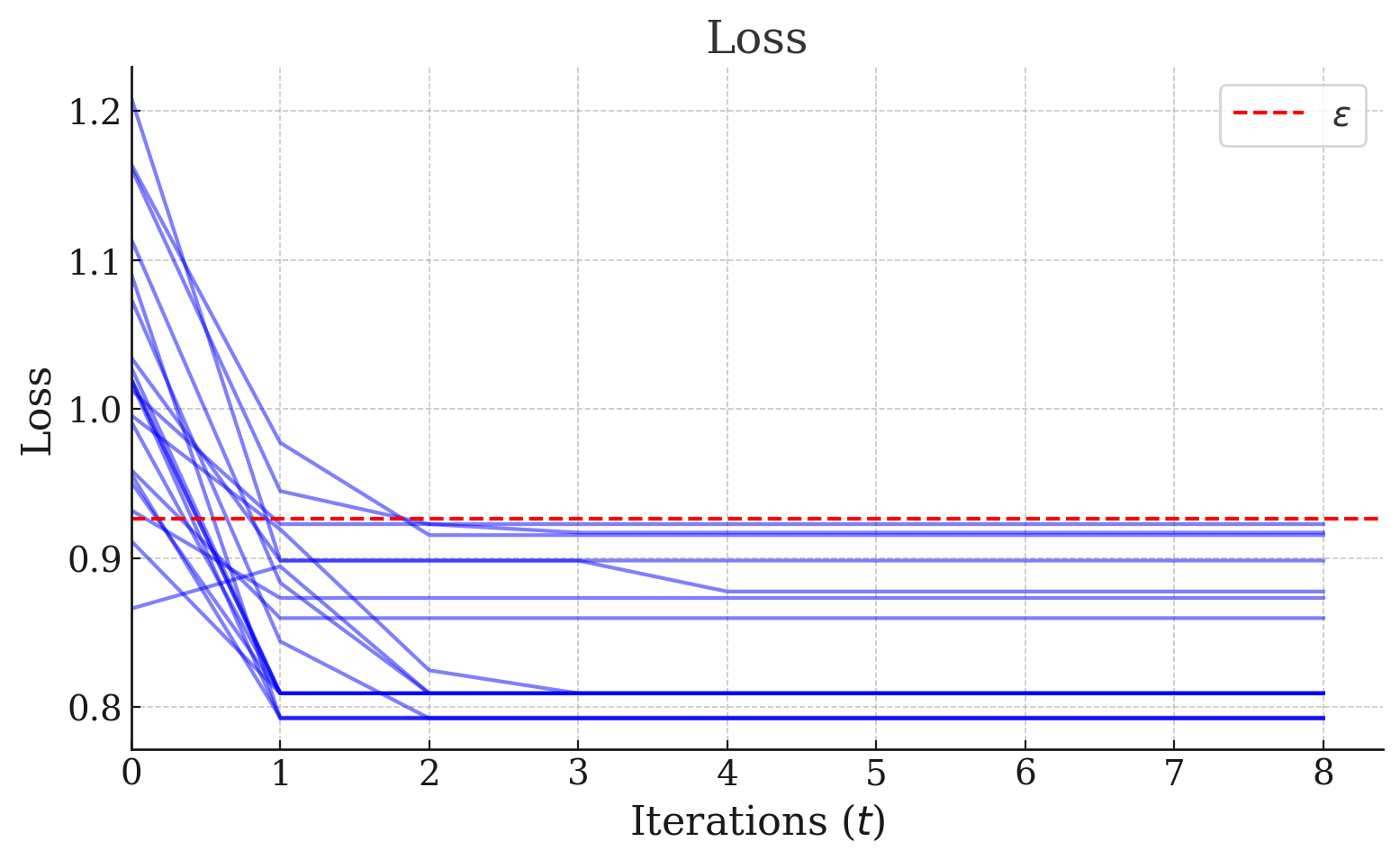}
    \caption{Loss of trajectories $\ell(\X)$.}
    \label{fig:loss}
\end{subfigure}
\caption{Convergence analysis for $n_{\mathrm{trials}} = 20$ initial conditions; $\graph$ is an random graph with $N=20$ nodes; $d=10$ alternatives; $\Weights$ and $\weights$ random.  }
\end{figure}

\subsection{Results}
Fig.~\ref{fig:convergence} shows that $\alpha(t)$ converges to some $\alpha \geq 0$, as expected  from Theorem \ref{thm:convergence}.
Specifically, in all but four of the trials, $\alpha(t)$ converges to $0$, which means that $\X(t)$ also converges with these initial conditions; see Corollary \ref{cor:converge}. For the same set of trials, Fig.~\ref{fig:loss} shows that the loss of $\X(t)$ converges for every trial to a value less than $\epsilon$, defined $\epsilon = \max_{(u,v) \in \edges} [\W]_{u,v}$.
By Theorem \ref{thm:existence}, this is expected when $\alpha(t) \to 0$. It is interesting that, at least for this set of trials, the loss $\ell(\X(t))$ also converges when $\alpha(t)$ converges to an $\alpha>0$, and the value to which it converges is also less than $\epsilon$. This means that it is possible that a negotiation continues indefinitely, even if the effective equilibrium condition has been satisfied.

To further investigate what happens in these situations, we focus on one of the four trajectories for which $\alpha(t)$ converges to some $\alpha>0$ and take a closer look at the evolution of the value vectors $\X_u(t)$ of each agent; see network in Fig.~\ref{fig:convergence}. We observe that the value vectors of all but three agents converge; the value vectors of the remaining agents decrease (point-wise) to negative-infinity. The non-converging agents $u \in \{4,9,14\}$ form a connected subgraph of $\graph$.

\subsection{Scalability}
In Algorithm \ref{alg:sync}, agents update their values in parallel. We investigated the relationship between the execution time of a single update of the heat equation \eqref{eq:heat} (i.e. Algorithm \ref{alg:sync}, Lines 3-13), and the number of agents, as well as the number of alternatives. By inspection of our results (see Fig.~\ref{fig:scalability}), we believe the time complexity is linear in the number of agents, but further analysis is required to confirm a potential computational advantage of our algorithm over centralized constraint satisfaction algorithms, e.g.~model-checking \cite{baier2008}. We remark that we are able to call the tropical Tarski Laplacian function in just over 8 minutes for a multi-agent system with around 800 agents who are evaluating 20 alternatives.

\begin{figure}
\centering
\includegraphics[width=0.5\textwidth]{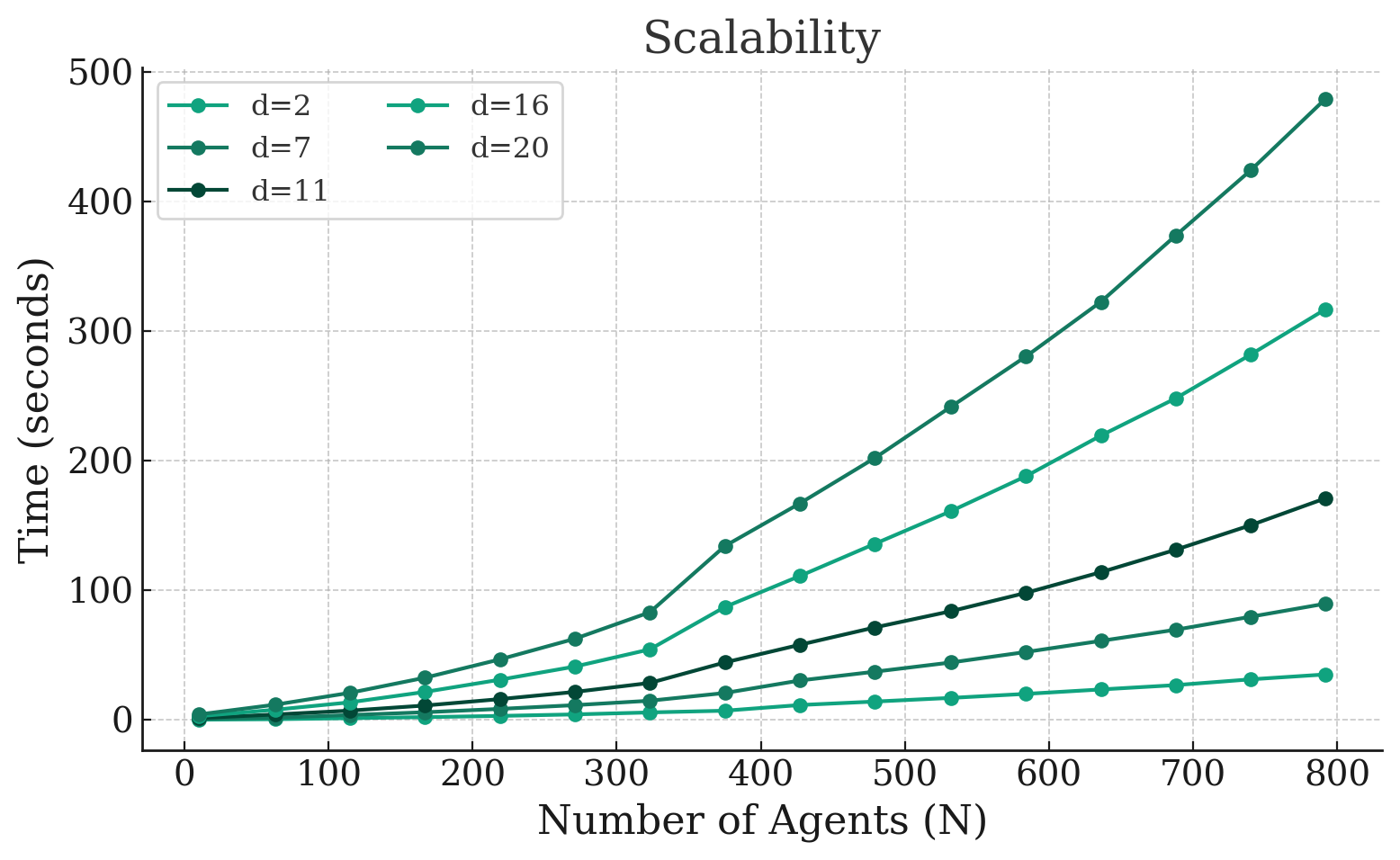}
\caption{Execution time of tropical Tarski Laplacian in the number of agents, $N$, and the number of alternatives, $d$.}
\label{fig:scalability}
\end{figure}

\bibliographystyle{ieeetr}
\bibliography{IEEEabrv,bibliography}

\appendix

    

\begin{proof}[Proof of Proposition \ref{prop:topical}]
    We first recall a lemma (see references \cite{crandall1980,gunawardena1995}) relating the three axioms: monotonicity, homogeneity, and non-expansiveness.
    \begin{lemma}[{\cite[Proposition 2]{crandall1980}, \cite[Proposition 1.1]{gunawardena1995}}]
        Suppose $\F: \R^n \to \R^n$ satisfies monotonicity and additive homogeneity. Then,
        \begin{align*}
            \| \F(\x) - \F(\y) \|_{\infty} \leq \| \x - \y\|_{\infty}
        \end{align*}
        for all $\x, \y \in \R^n$.
    \end{lemma}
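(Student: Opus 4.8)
The plan is to prove this classical implication (the Crandall--Tartar / Gunawardena characterization of topical maps) by a short ``sandwiching'' argument that converts the $\ell_\infty$ bound into a pair of product-order inequalities, pushes each half through $\F$, and reads off the conclusion coordinatewise. First I would fix $\x, \y \in \R^n$ and set $\lambda = \|\x - \y\|_\infty = \max_i |[\x]_i - [\y]_i|$. By definition of the sup-norm, every coordinate obeys $-\lambda \leq [\x]_i - [\y]_i \leq \lambda$, which in the product order is exactly the vector sandwich $\y - \lambda \preceq \x \preceq \y + \lambda$, where $\y \pm \lambda$ denotes the uniform shift adding the scalar $\pm\lambda$ to every entry.

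Next I would feed each half of this sandwich through $\F$, using the two hypotheses in tandem. Applying monotonicity to $\x \preceq \y + \lambda$ and then additive homogeneity gives $\F(\x) \preceq \F(\y + \lambda) = \F(\y) + \lambda$. Symmetrically, monotonicity applied to $\y - \lambda \preceq \x$ together with homogeneity gives $\F(\y) - \lambda = \F(\y - \lambda) \preceq \F(\x)$. Combining the two yields the two-sided bound
\[
\F(\y) - \lambda \preceq \F(\x) \preceq \F(\y) + \lambda,
\]
i.e.\ $-\lambda \leq [\F(\x)]_i - [\F(\y)]_i \leq \lambda$ for every index $i$. Taking the maximum over coordinates gives $\|\F(\x) - \F(\y)\|_\infty \leq \lambda = \|\x - \y\|_\infty$, which is precisely the claimed non-expansiveness.

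I expect no substantive obstacle: the entire content is the observation that additive homogeneity lets the scalar $\lambda$ pass through $\F$ unchanged while monotonicity preserves the order, so the two mild points requiring care are purely notational---keeping straight that $\x + \alpha$ denotes a uniform coordinate shift and that $\preceq$ is the coordinatewise order, and noting that the scalar inequalities $-\lambda \leq [\x]_i - [\y]_i$ and $[\x]_i - [\y]_i \leq \lambda$ are together equivalent to the vector sandwich. Since $\x, \y$ were arbitrary, the bound holds for all pairs, completing the proof.
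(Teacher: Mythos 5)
Your proof is correct: the sandwich $\y - \lambda \preceq \x \preceq \y + \lambda$ with $\lambda = \|\x - \y\|_{\infty}$, pushed through $\F$ via monotonicity and additive homogeneity, is exactly the classical Crandall--Tartar argument. The paper does not prove this lemma itself---it cites it to the references---so your argument correctly supplies the standard proof found there, with no gaps.
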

    We argue that $\F$ satisfies homogeneity and monotonicity. It suffices to show $\Laplacian$ satisfies these two properties. Suppose $\alpha \in \R$. Then, by repeated application of Lemma \ref{lem:nonlinear}-2,
    \begin{align*}
        \Laplacian(\X + \alpha)_u &=& \bigmeetplus_{v \in \neighbors_u} [\W]_{u,v} + \inv{\A}_{u,v} \meettimes \left( \A_{v,u} \jointimes \left( \X_v + \alpha \right) \right) \\
        &=& \alpha + \bigmeetplus_{v \in \neighbors_u} [\W]_{u,v} + \inv{\A}_{u,v} \meettimes \left( \A_{v,u} \jointimes \X_v \right).
    \end{align*}

    For monotonicity, suppose $\X \preceq \Y$. Then, by applying Lemma \ref{lem:nonlinear}-3, it follows $\A_{v,u} \jointimes \X_v \preceq \A_{v,u} \jointimes \Y_v$, which implies $\inv{\A}_{u,v} \meettimes \left( \A_{v,u} \jointimes \X_v \right) \preceq \inv{\A}_{u,v} \meettimes \left(\A_{v,u} \jointimes \Y_v \right)$, and so forth.
\end{proof}

\begin{proof}[Proof of Theorem \ref{thm:convergence}]
    Suppose $\X(t)$ is a trajectory of the heat equation. By Proposition \ref{prop:topical},
    \begin{align*}
        \| \X(t+2) - \X(t+1) \|_{\infty} &\leq& \| \X(t+1) - \X(t) \|_{\infty}.
    \end{align*}
    Hence, $\alpha(t) = \|\X(t+1) - \X(t)\|$ is a monotonically decreasing sequence bounded below, implying $\alpha(t) \to \alpha$ for some $\alpha \geq 0$.
\end{proof}

\begin{proof}[Proof of Theorem \ref{thm:semimodule}]
    Suppose $\X \in \semimod$ and $\alpha \in \Rmax$. Then, by the proof of Proposition \ref{prop:topical},
    \begin{align*}
        \Laplacian(\X + \alpha) = \Laplacian(\X) + \alpha \geq \X + \alpha
    \end{align*}
    Therefore, $\X + \alpha \in \semimod$. Suppose $\X, \Y \in \semimod$. Then, by Lemma \ref{lem:nonlinear}-1, $\Laplacian(\X \join \Y)_u=$
    \begin{align*}
        \bigmeet_{v \in \neighbors_u} [\W]_{u,v} +' \inv{\A_{u,v}} \meettimes \left( \A_{v,u} \jointimes \X_v \join \A_{v,u} \jointimes \Y_v \right). 
    \end{align*}
    By assumption,
    \begin{align*}
        \bigmeetplus_{v \in \neighbors_u} [\W]_{u,v} +' \inv{\A}_{u,v} \meettimes \left( \A_{v,u} \jointimes \X_v \right) \succeq \X_u, \\
          \bigmeetplus_{v \in \neighbors_u} [\W]_{u,v} +' \inv{\A}_{u,v} \meettimes \left( \A_{v,u} \jointimes \Y_v \right) \succeq \Y_u,
    \end{align*}
    which implies (Lemma \ref{lem:adjoint})
    \begin{align*}
        \A_{v,u} \jointimes \X_v  &\succeq& \A_{u,v} \jointimes \left( \X_u - [\W]_{u,v} \right), \\
        \A_{u,v} \jointimes \X_u  &\succeq& \A_{v,u} \jointimes \left( \X_v - [\W]_{v,u} \right). 
   \end{align*}
   Hence (preceding argument, monotonicity of $\Laplacian$, Lemma \ref{lem:nonlinear}-1),
   \footnotesize
   \begin{align*}
       \bigmeet_{v \in \neighbors_u} [\W]_{u,v} +' \inv{\A}_{u,v} \meettimes \left( \A_{v,u} \jointimes \X_v \join \A_{v,u} \jointimes \Y_v \right) \succeq   \\
       \bigmeet_{v \in \neighbors_u}  [\W]_{u,v} +' \inv{\A}_{u,v} \meettimes \Bigl( \A_{u,v} \jointimes \bigl( \X_u - [\W]_{u,v} \bigr) \Bigr. \\ 
       \Bigl. \quad \join \quad \A_{u,v} \jointimes \Bigl( \Y_u - [\W]_{u,v} \bigr) \Bigr)  \\
       = \bigmeet_{v \in \neighbors_u} \inv{\A}_{u,v} \meettimes \left( \A_{u,v} \jointimes \left( \X_u \join \Y_u \right) \right).
   \end{align*}
   \normalsize
   By Lemma \ref{lem:adjoint}, $\inv{\A}_{u,v} \meettimes \left( \A_{u,v} \jointimes \left( \X_u \join \Y_u \right) \right) \succeq \X_u \join \Y_u$. Hence,
\begin{align*}
        \Laplacian(\X \join \Y)_u  &\succeq& \bigmeet_{v \in \neighbors_u} \inv{\A}_{u,v} \meettimes \left( \A_{u,v} \jointimes \left( \X_u \join \Y_u \right) \right) \\
        &\succeq& \X_u \join \Y_u.
\end{align*}
Therefore, $\X \join \Y \in \semimod$.
\end{proof}

\begin{proof}[Proof of Theorem \ref{thm:existence}]
    $\X \in \semimod$ is equivalent to 
    \begin{align*}
        \bigmeetplus_{v \in \neighbors_u} [\W]_{u,v} +' \inv{\A}_{u,v} \meettimes \left( \A_{v,u} \jointimes \X_v \right) \succeq \X_u \quad \forall u \in \nodes.
    \end{align*}
   It follows by the greatest lower bound property
   \begin{align}
       [\W]_{u,v} +' \inv{\A}_{u,v} \meettimes \left( \A_{v,u} \jointimes \X_v \right) \succeq \X_v \label{eq:thmmain2}
   \end{align}
   holds for all $(u,v)$. If $(u,v) \notin \edges$, then $[\W]_{u,v} = \infty$ and \eqref{eq:thmmain2} automatically holds. If $(u,v) \in \edges$, then, by symmetry,
   \begin{align*}
       [\W]_{u,v} +' \inv{\A}_{u,v} \meettimes \left( \A_{v,u} \jointimes \X_v \right) \succeq \X_u, \\
       [\W]_{v,u} +' \inv{\A}_{v,u} \meettimes \left( \A_{u,v} \jointimes \X_u \right) \succeq \X_v, 
   \end{align*}
   for all $(u,v) \in \edges$. Hence,
    \begin{align*}
       \inv{\A}_{u,v} \meettimes \left( \A_{v,u} \jointimes \X_v \right) \succeq \X_u - [\W]_{u,v},  \\
       \inv{\A}_{v,u} \meettimes \left( \A_{u,v} \jointimes \X_u \right) \succeq \X_v - [\W]_{v,u}.
   \end{align*}
   By Lemma \ref{lem:adjoint},
    \begin{align*}
        \A_{v,u} \jointimes \X_v  \succeq \A_{u,v} \jointimes \left( \X_u - [\W]_{u,v} \right), \\
        \A_{u,v} \jointimes \X_u  \succeq \A_{v,u} \jointimes \left( \X_v - [\W]_{v,u} \right). 
   \end{align*}
   Rearranging terms and Lemma \ref{lem:nonlinear}-2 implies
   \begin{align*}
   -[\W]_{u,v} \preceq \A_{v,u} \jointimes \X_v - \A_{u,v} \jointimes \X_u \preceq [\W]_{v,u}.
   \end{align*}
   By Assumption \ref{assume:scalar},
    \begin{align*}
   \big| \A_{v,u} \jointimes \X_v  - \A_{u,v} \jointimes \X_u \big| \leq [\W]_{u,v} \quad  \forall (u,v) \in \edges,
    \end{align*}
    implying 
    \begin{align*}
        \| \A_{v,u} \jointimes \X_v - \A_{u,v} \jointimes \X_u  \|_{\infty} \leq [\W]_{u,v} \quad \forall (u,v) \in \edges.
    \end{align*}
\end{proof}

\end{document}